\documentclass[11pt]{article}

\usepackage{amsmath,amsthm,amssymb,amsfonts,mathtools}
\usepackage[margin=1in]{geometry}
\usepackage[english]{babel}
\usepackage{xcolor}
\usepackage{hyperref}

\newcommand{\suppsection}[1]{%
  \hyperref[#1]{Supplementary Material, Section~\ref*{#1}}}
\hypersetup{
  colorlinks=true,
  linkcolor=blue,
  citecolor=blue,
  urlcolor=blue,
  filecolor=blue,
  pdftitle={A Frame-Spread Lower Bound for Quantum Entropy Estimation under Fixed Rank-One Measurements},
  pdfauthor={Xinyu Song}
}

\theoremstyle{plain}
\newtheorem{theorem}{Theorem}[section]
\newtheorem{corollary}[theorem]{Corollary}
\newtheorem{proposition}[theorem]{Proposition}
\newtheorem{lemma}[theorem]{Lemma}

\theoremstyle{remark}
\newtheorem{remark}[theorem]{Remark}

\newcommand{\C}{\mathbb C}
\newcommand{\E}{\mathbb E}
\newcommand{\PP}{\mathbb P}
\newcommand{\Hd}{\mathbb H_d}
\newcommand{\Hdz}{\mathbb H_{d,0}}
\newcommand{\dd}{\mathrm{d}}
\newcommand{\tr}{\operatorname{tr}}
\newcommand{\rank}{\operatorname{rank}}
\newcommand{\TV}{\operatorname{TV}}
\newcommand{\norm}[1]{\left\|#1\right\|}

\newcommand{\ket}[1]{\left|#1\right\rangle}
\newcommand{\bra}[1]{\left\langle#1\right|}
\newcommand{\Fr}{\mathcal F_\nu}
\newcommand{\Frz}{\mathcal F_{\nu,0}}

\begin{document}

\title{A Frame-Spread Lower Bound for Quantum Entropy
Estimation under Fixed Rank-One Measurements}

\author{Xinyu Song\\[0.35em]
\small School of Statistics and Data Science\\
\small Shanghai University of Finance and Economics, Shanghai, China\\
\small \texttt{song.xinyu@mail.shufe.edu.cn}}
\date{August 2026}

\maketitle

\begin{abstract}
We study estimation of the von Neumann entropy of a $d$-dimensional
quantum state from independent outcomes of a fixed rank-one measurement.
Writing $\kappa_\nu=(d+1)\|\mathcal F_{\nu,0}\|_{2\to2}$ for the normalized
trace-zero frame spread, one has $1\le\kappa_\nu\le d+1$, with $\kappa_\nu=1$
for exact complex projective $2$-designs. For any fixed rank-one measurement
and any sample size $n\le c_2d^2/\kappa_\nu$, the minimax risk is bounded below
by $c_1\log^2\{d^2/(\kappa_\nu n)\}$. For bounded frame spread and every fixed
$0<a<2$, the minimax risk is of order $\log^2d$ throughout
$n\lesssim d^{2-a}$, so fixed uniform accuracy requires $\Omega(d^2)$
observations. Existing tomography bounds give bounded risk from
$O(d^3\log^3d)$ observations for finite equal-weight exact designs and the
Haar covariant measurement, while the intermediate minimax regime remains
open. The results concern fixed measurements repeated
independently across copies and exclude adaptive and collective protocols and
protocols that reuse common randomness across copies.
\end{abstract}

\noindent\textbf{Keywords:}
von Neumann entropy; minimax lower bound; projective design;
randomized measurement; frame operator; chi-square divergence.

\smallskip
\noindent\textbf{2020 Mathematics Subject Classification:} 62C20; 81P50.

\section{Introduction}
\label{sec:introduction}

The von Neumann entropy $S(\rho)=-\tr(\rho\log\rho)$ is a spectral
functional of a quantum state $\rho$ on $\C^d$. Existing analyses have
mainly focused on experiments that may process several copies of
the state jointly, using Schur--Weyl duality and empirical Young diagrams
\cite{ODonnellWright2016,AISW2020,WangZhang2025}. Randomized-measurement
and classical-shadow protocols instead draw a fresh random unitary for
each copy, measure that copy separately, and retain a classical record
\cite{HKP2020,ElbenReview2023}. The resulting records are independent
outcomes from one fixed rank-one positive operator-valued measure (POVM).
For a review of quantum measurement and state estimation from a statistical
perspective, see \cite{WangSong2020}.

This paper studies the number of independent outcomes required for
uniformly accurate entropy estimation in this fixed-measurement
experiment. For a fixed rank-one POVM $\nu$, let $\Fr$ be its
second-moment operator on Hermitian matrices, and let $\kappa_\nu$ be
$d+1$ times the largest eigenvalue of $\Fr$ on trace-zero matrices. Then
$1\le\kappa_\nu\le d+1$. Thus $\kappa_\nu$ measures the largest weighting of
a trace-zero direction relative to the isotropic $2$-design frame; a smaller
value yields a stronger lower bound. Moreover, $\kappa_\nu=1$ when $\nu$ is a
complex projective $2$-design. This class contains the Haar covariant
measurement in every dimension and, when $d=2^q$, the uniform stabilizer
measurement generated by the global Clifford protocol \cite{Zhu2017}.

The main result is the minimax mean-squared-error lower bound
\[
  R_{n,d}^*(\nu)\ge c_1\log^2\!\left(\frac{d^2}{\kappa_\nu n}\right),
  \qquad 1\le n\le\frac{c_2d^2}{\kappa_\nu},
\]
where $c_1,c_2>0$ are universal constants. For measurements with bounded
$\kappa_\nu$, the lower bound follows the curve $\log^2(d^2/n)$. For every
fixed $0<a<2$, the minimax risk throughout
$n\lesssim d^{2-a}$ is of the same order $\log^2d$ as the risk of a
constant estimator. The same bounds hold for measurements with bounded
relative frame error. Existing projected least-squares tomography
bounds, combined with entropy continuity, give uniformly bounded entropy
risk for finite exact designs equipped with equal weights and for the Haar
covariant measurement when $n\ge Cd^3\log^3d$. For these measurements,
and for any sufficiently small fixed target mean squared error, the required
sample size lies between order $d^2$ and order $d^3\log^3d$. The minimax
behavior in the intermediate regime remains open.

For fixed deterministic unentangled measurements, Liu and Acharya
\cite{LiuAcharya2024} proved that the optimal copy complexity of mixedness
testing is $\Theta(d^2/\varepsilon^2)$. Their L\"uders channel coincides with
our frame operator for rank-one POVMs, and quantum Pinsker plus thresholding
already gives an $\Omega(d^2)$ requirement at a sufficiently small fixed
entropy risk. That bounded-distance reduction yields only a fixed positive
risk. Theorem~\ref{thm:main} instead gives a frame-dependent curve that grows
with $d$ and, with the constant estimator, is sharp up to constants throughout
the polynomially subquadratic regimes of Corollary~\ref{cor:two-sided}.

Recent preprints report lower bounds for the unrestricted quantum experiment,
which permits collective measurements on all copies
\cite{FOW2026,Wang2026}. Gao and Wang \cite{GaoWang2026} construct a
subquadratic estimator in the same model. The lower bounds also apply to the
fixed-measurement experiment but do not provide the frame-dependent risk curve.
After clipping the estimator of \cite{GaoWang2026} and applying
$O(\log\log d)$-fold median amplification, one obtains bounded mean squared
error with $O\{d^2(\log\log d)^3/\log^2d\}$ copies. For sample sizes of this
order and all sufficiently large $d$, Theorem~\ref{thm:main} gives risk of
order at least $\log^2\log d$ for bounded frame spread.
Section~\ref{sec:discussion} details this growing-risk separation.

Our proofs compare the maximally mixed state with a Haar mixture of flat
rank-$r$ states. The mixture chi-square divergence reduces through $\Fr$ to a
kernel of two prior draws. Conditioning on one draw and applying unitary
concentration to the other bounds its exponential moment in terms of
$\kappa_\nu$, without a design identity. Related mixture arguments appear in
single-copy tomography and testing
\cite{ChenHuangLiLiu2022,LoweNayak2025}.

Section~\ref{sec:model} defines the experiment and frame spread;
Section~\ref{sec:results} gives the results; and Section~\ref{sec:discussion}
gives comparisons and scope. Complete proofs are in the Supplementary Material.

\section{Fixed rank-one measurements and frame spread}
\label{sec:model}

Let $\Theta_d$ be the set of density matrices on $\C^d$, let $\Hd$ be
the real Hilbert space of $d\times d$ Hermitian matrices with inner
product $\langle A,B\rangle=\tr(AB)$ and norm $\norm{\cdot}_F$, and let
$\Hdz$ be its trace-zero subspace. Let $(\mathcal Z,\mathcal A,\nu)$ be
a probability space and $z\mapsto P_z=\ket z\bra z$ a measurable map
into rank-one projectors such that
\begin{equation}
  \int P_z\,\nu(\dd z)=\frac Id .
  \label{eq:normalization}
\end{equation}
Then $M(\dd z)=dP_z\,\nu(\dd z)$ is a rank-one POVM, and one observation
$Z$ under the state $\rho$ has density $p_\rho(z)=d\tr(\rho P_z)$ with
respect to $\nu$. Under $\rho_0=I/d$ the density is identically one. The
same measurement is applied independently to each of $n$ copies, so the
data are independent and identically distributed. In the global
Clifford protocol a uniform Clifford unitary is drawn, the copy is
measured in the computational basis, and both are recorded; the recorded
direction is a uniform stabilizer state, and the pair is statistically
equivalent to it (\suppsection{supp-sec:models}). The Haar covariant
measurement is the case in which $\nu$ is the Haar measure on pure
states.

Define the frame operator
\[
  \Fr(A)=d\int\tr(AP_z)\,P_z\,\nu(\dd z),\qquad A\in\Hd .
\]
By \eqref{eq:normalization}, $\Fr(I)=I$ and $\tr\Fr(A)=\tr A$, so
$\Fr$ leaves $\Hdz$ invariant; let $\Frz$ denote the restriction. The
\emph{frame spread} of $\nu$ is
\[
  \kappa_\nu=(d+1)\,\norm{\Frz}_{2\to2},
\]
the operator norm being taken with respect to $\norm{\cdot}_F$.

\begin{lemma}
\label{lem:frame}
Let $d\ge2$ and let $\nu$ satisfy \eqref{eq:normalization}.
\begin{enumerate}
\item[(i)] $\Fr$ is self-adjoint and positive semidefinite on $\Hd$, and
$1\le\kappa_\nu\le d+1$.
\item[(ii)] If $\nu$ is an exact complex projective $2$-design, that is,
$\int P_z^{\otimes2}\,\nu(\dd z)=(I+\mathsf F)/\{d(d+1)\}$ with
$\mathsf F$ the swap operator, then $\Frz=(d+1)^{-1}I$ and
$\kappa_\nu=1$.
\item[(iii)] If $\norm{(d+1)\Frz-I}_{2\to2}\le\eta$, then
$\kappa_\nu\le1+\eta$.
\item[(iv)] For all $\rho,\sigma\in\Theta_d$,
\begin{equation}
  \int p_\rho(z)p_\sigma(z)\,\nu(\dd z)
  =1+d\,\tr\!\left\{(\rho-I/d)\,\Frz(\sigma-I/d)\right\}.
  \label{eq:affinity}
\end{equation}
\end{enumerate}
\end{lemma}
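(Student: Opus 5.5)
The plan is to write everything through the quadratic form $\langle A,\Fr(A)\rangle=d\int\tr(AP_z)^2\,\nu(\dd z)$ and to use two facts about $\nu$: the normalization \eqref{eq:normalization}, and the identities $\Fr(I)=I$ and $\tr\Fr(A)=\tr A$ recorded above.

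For (i), self-adjointness is immediate because
\[
\langle B,\Fr(A)\rangle=d\int\tr(AP_z)\tr(BP_z)\,\nu(\dd z)
\]
is symmetric in $A$ and $B$. Taking $B=A$ gives a nonnegative integrand, so $\Fr$ is positive semidefinite. Since $\Fr(I)=I$ and $\Fr$ is self-adjoint, $\Hdz=I^\perp$ is invariant, and $\Frz$ is self-adjoint and positive semidefinite.

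For the upper bound $\kappa_\nu\le d+1$, I use $\tr(AP_z)^2\le\norm{A}_F^2\norm{P_z}_F^2=\norm{A}_F^2$, which gives $\norm{\Fr}\le d\le d+1$. A sharper bound $\norm{\Frz}\le1$ follows from the same estimate, which is already enough.

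For the lower bound $\kappa_\nu\ge1$, I compute the trace of $\Fr$ as an operator on $\Hd$. With an orthonormal basis $\{E_k\}$ of $\Hd$,
\[
\sum_k\langle E_k,\Fr E_k\rangle=d\int\sum_k\tr(E_kP_z)^2\,\nu(\dd z)=d\int\norm{P_z}_F^2\,\nu(\dd z)=d .
\]
The eigenvalue on $I/\sqrt d$ is $1$, so $\tr\Frz=d-1$ on a space of dimension $d^2-1$. Hence the largest eigenvalue is at least $(d-1)/(d^2-1)=1/(d+1)$, which gives $\kappa_\nu\ge1$. This trace computation is the only step needing an idea.

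For (ii), write $\tr(AP_z)\,P_z=\tr_1\{(A\otimes I)P_z^{\otimes2}\}$ (partial trace over the first factor) and insert the design identity. Using $\tr_1\{(A\otimes I)\}=\tr(A)\,I$ and $\tr_1\{(A\otimes I)\mathsf F\}=A$, I get
\[
\Fr(A)=\frac{d}{d(d+1)}\{\tr(A)I+A\}.
\]
This equals $A/(d+1)$ on $\Hdz$, so $\kappa_\nu=1$.

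Part (iii) is the triangle inequality:
\[
\norm{(d+1)\Frz}\le\norm{I}+\norm{(d+1)\Frz-I}\le1+\eta .
\]

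For (iv), write $\rho=I/d+X$ and $\sigma=I/d+Y$ with $X,Y\in\Hdz$. Then $p_\rho=1+d\tr(XP_z)$ and $p_\sigma=1+d\tr(YP_z)$. By \eqref{eq:normalization}, $\int\tr(XP_z)\,\nu(\dd z)=\tr(X)/d=0$, so the cross terms vanish. What remains is
\[
1+d^2\int\tr(XP_z)\tr(YP_z)\,\nu(\dd z)=1+d\langle X,\Fr Y\rangle=1+d\tr\{X\Frz(Y)\},
\]
which is exactly \eqref{eq:affinity}.
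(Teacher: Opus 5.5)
Your arguments for self-adjointness and positivity, the trace computation giving $\kappa_\nu\ge1$, and parts (ii)--(iv) are correct and follow the paper's proof. The gap is in the upper bound $\kappa_\nu\le d+1$.

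Because $\kappa_\nu=(d+1)\norm{\Frz}_{2\to2}$ already contains the factor $d+1$, you need $\norm{\Frz}_{2\to2}\le1$. The comparison $\norm{\Fr}_{2\to2}\le d\le d+1$ does not address this. Your estimate $\tr(AP_z)^2\le\norm{A}_F^2$ gives only $\langle A,\Fr(A)\rangle\le d\norm{A}_F^2$, hence only $\kappa_\nu\le d(d+1)$.

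Your claim that $\norm{\Frz}_{2\to2}\le1$ ``follows from the same estimate'' is unsupported. A bound that holds pointwise in $z$ never uses the normalization \eqref{eq:normalization}, and the normalization is what must cancel the prefactor $d$. The Frobenius bound is essentially sharp pointwise: for $A=P_{z_0}-I/d$ at $z=z_0$ the two sides differ only by a factor $1-1/d$. So no improvement can come from it alone.

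The missing step is to apply Cauchy--Schwarz to the vectors $\ket z$ and $A\ket z$, rather than in Frobenius norm:
\[
\tr(AP_z)^2=\langle z|A|z\rangle^2\le\langle z|A^2|z\rangle=\tr(A^2P_z).
\]
Integrating against $\nu$ and using \eqref{eq:normalization} gives
\[
  \langle A,\Fr(A)\rangle\le d\int\tr(A^2P_z)\,\nu(\dd z)
  =\tr\!\left(A^2\,d\!\int P_z\,\nu(\dd z)\right)=\norm{A}_F^2 .
\]
Hence $\norm{\Fr}_{2\to2}\le1$, so $\norm{\Frz}_{2\to2}\le1$ and $\kappa_\nu\le d+1$. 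This is the paper's argument. Equivalently, you can take traces in Kadison's inequality $\Fr(A)^2\le\Fr(A^2)$ for the unital positive map $\Fr$, and then use trace preservation.
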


For the lower-bound argument, the measurement enters through the affinity
identity \eqref{eq:affinity}. For a prior $\pi$ on $\Theta_d$, let
$P_\pi^{(n)}=\int P_\rho^{(n)}\,\pi(\dd\rho)$ be the mixture of the
$n$-sample laws. Independence and Fubini's theorem give the mixture
chi-square identity
\begin{equation}
  1+\chi^2\!\left(P_\pi^{(n)}\,\middle\|\,P_{I/d}^{(n)}\right)
  =\E_{\rho,\rho'\stackrel{\mathrm{iid}}{\sim}\pi}
  \left[\left\{1+d\,\tr\!\left((\rho-I/d)\,\Frz(\rho'-I/d)\right)
  \right\}^{n}\right].
  \label{eq:mixture-chi2}
\end{equation}
Under an exact $2$-design the bracket equals $d\{1+\tr(\rho\rho')\}/(d+1)$.

Let $\mathcal E_n$ be the class of measurable maps from $\mathcal Z^n$
to $\mathbb R$. For $\widehat S\in\mathcal E_n$, define
\[
  R_{n,d}(\widehat S;\nu)
  =\sup_{\rho\in\Theta_d}\E_\rho\{\widehat S-S(\rho)\}^2,
  \qquad
  R_{n,d}^*(\nu)=\inf_{\widehat S\in\mathcal E_n}R_{n,d}(\widehat S;\nu).
\]

\section{Main results}
\label{sec:results}

\begin{theorem}
\label{thm:main}
For every $d\ge2$, every rank-one POVM $\nu$ satisfying
\eqref{eq:normalization}, and every integer $n$ with
$1\le n\le d^2/(150\kappa_\nu)$,
\[
  R_{n,d}^*(\nu)
  \ge\frac1{72}\log^2\!\left(\frac{d^2}{\kappa_\nu n}\right).
\]
\end{theorem}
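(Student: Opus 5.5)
We use Le Cam's two-point method with a mixture alternative, as the introduction indicates. The null hypothesis is $\rho_0=I/d$, with $S(\rho_0)=\log d$. The alternative is the prior $\pi_r$: the law of $\rho=UP_rU^*/r$, where $P_r$ projects onto the first $r$ coordinates and $U$ is Haar on $U(d)$. Every state in its support has $S(\rho)=\log r$, so the entropy separation is $\Delta=\log(d/r)$. If $\chi^2(P_{\pi_r}^{(n)}\|P_{I/d}^{(n)})\le c$ for a small constant $c$, say $c\le 1/4$, then $\TV\le\sqrt c/2$. The standard reduction (test by whether $\widehat S$ is closer to $\log d$ or $\log r$) then gives
\[
R_{n,d}^*(\nu)\ge\frac{\Delta^2}{4}\cdot\frac{1-\TV}{2}.
\]
With $\TV\le 1/4$ this is of order $\Delta^2$, and a careful constant choice should produce $1/72$. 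We will choose $r\asymp\kappa_\nu n/d$, rounded to an integer, so that $\Delta=\log(d/r)\ge\frac12\log\{d^2/(\kappa_\nu n)\}$ up to the constant slack absorbed in $1/72$. The constraint $n\le d^2/(150\kappa_\nu)$ guarantees that we may take $1\le r\le d/\text{const}$. When $\kappa_\nu n/d<1$ we take $r=1$ (pure states), and then $\Delta=\log d\ge\frac12\log\{d^2/(\kappa_\nu n)\}$ still holds because $\kappa_\nu n\ge1$.

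To bound the chi-square we use \eqref{eq:mixture-chi2}. Write $X=d\,\tr\{(\rho-I/d)\Frz(\rho'-I/d)\}$ for independent draws $\rho,\rho'\sim\pi_r$. Then $1+\chi^2=\E(1+X)^n\le\E e^{nX}$, and it suffices to show $\E e^{nX}\le 1+c$.

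Condition on $\rho'$ and set $B=d\,\Frz(\rho'-I/d)\in\Hdz$. We have $\|B\|_F\le d\,\kappa_\nu(d+1)^{-1}\|\rho'-I/d\|_F\le\kappa_\nu/\sqrt r$, since $\|\rho'-I/d\|_F^2=1/r-1/d$. Now consider $f(U)=\tr\{(UP_rU^*/r)B\}$ as a function of Haar $U$. Its mean is $\tr(B)/d=0$. It is Lipschitz in Frobenius norm with constant $L\le 2\|B\|_{\rm op}\|P_r\|_F/r\le 2\|B\|_F/\sqrt r\le 2\kappa_\nu/r$. Haar measure on $U(d)$ satisfies log-Sobolev/concentration with constant of order $1/d$, which gives the sub-Gaussian bound $\E e^{\lambda f}\le\exp(C\lambda^2L^2/d)$. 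Taking $\lambda=n$ yields
\[
\E e^{nX}\le\exp\{Cn^2\kappa_\nu^2/(r^2d)\}.
\]
With $r\asymp\kappa_\nu n/d$ the exponent is of order $d/n$, which is not small, so this crude Lipschitz bound is too weak.

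This is the main obstacle. The Lipschitz estimate $\|B\|_{\rm op}\le\|B\|_F$ is wasteful. We must exploit the fact that $B$ is spread out, with $\|B\|_F$ small, and use a variance-sensitive concentration. The fluctuation of $f$ is really of size $\|B\|_F/(\sqrt r\,d)$ rather than $L/\sqrt d$. We will therefore use a Bernstein-type (Hanson–Wright) inequality for $\tr(UP_rU^*B)$, e.g. by representing $UP_r$ through Gaussian $d\times r$ matrices $G$ with $G(G^*G)^{-1/2}$, or by Haar concentration applied to the Gaussian surrogate. The sub-Gaussian term then becomes
\[
n^2\|B\|_F^2/(r\,d^2)\le n^2\kappa_\nu^2/(r^2d^2),
\]
which is $O(1)$ exactly when $r\gtrsim\kappa_\nu n/d$; this explains the choice of $r$. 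The sub-exponential term, $n\|B\|_{\rm op}/(rd)\lesssim n\kappa_\nu/(r^{3/2}d)$, is also small in that regime, provided $\lambda=n$ stays within the range where the sub-Gaussian branch applies. The Gaussian-to-Haar normalization error must also be controlled, on an event where $G^*G\approx dI_r$ whose complement has exponentially small probability, with a trivial bound $|X|\le d$ off that event. Taking the constant in $r$ large, consistent with $n\le d^2/(150\kappa_\nu)$, makes the final exponential moment at most $1+c$. Averaging over $\rho'$ and feeding the result into the two-point bound finishes the proof.
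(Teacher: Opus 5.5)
Your reduction (Le Cam with the Haar mixture $\pi_r$, the mixture chi-square identity, $(1+X)^n\le e^{nX}$, conditioning on $\rho'$) matches the paper. There is a genuine gap, however: the step that is supposed to finish the proof is only asserted, and the route you sketch for it does not close. You need $\E_U e^{nX}\le 1+c$ for $X=\tr(U\Lambda U^*B)$ with $\Lambda=P_r/r$, and you propose to obtain it from a Gaussian surrogate on a good event $\{G^*G\approx dI_r\}$.

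Off that event you bound $|X|\le d$, so you must control $e^{nd}\,\PP(\text{bad})$. A single column norm $\norm{g_1}^2$ already exceeds $(1+\epsilon)d$ with probability at least about $e^{-c_\epsilon d}$, where $c_\epsilon=\epsilon-\log(1+\epsilon)<1$ for $\epsilon\le1$. So this term blows up for every $n\ge1$.

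The sharper bound $X\le\kappa_\nu/r$ only reduces the factor to $e^{n\kappa_\nu/r}\asymp e^{d/C}$ when $r=C\kappa_\nu n/d$. That forces $\epsilon$ to be a fixed constant. But then on the good event the normalization error, written through the polar decomposition $G=WS^{1/2}$, is bounded directly only by $\frac nr|\tr\{W^*BW(I_r-S/d)\}|\le n\kappa_\nu\epsilon/r\asymp d\epsilon/C$, which is not small.

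Also, taking ``the constant in $r$ large'' conflicts with the entropy gap. The gap $\log(d/r)\approx\log y-\log C$, with $y=d^2/(\kappa_\nu n)$, must remain a fixed fraction of $\log y$ already at $y=150$. That caps $C$ at a modest absolute constant, so without explicit constants you cannot certify $1/72$.

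The obstacle you identify is not intrinsic; it comes from your Lipschitz estimate and from insisting on $r\asymp\kappa_\nu n/d$.

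\textbf{The paper's fix.} Put the operator norm on $\Lambda$, not on $B$: $|\tr(D\Lambda U^*B)|\le\norm{D}_F\norm{\Lambda}_{\mathrm{op}}\norm{B}_F$. This gives $L\le2\norm{B}_F/r\le2\kappa_\nu r^{-3/2}$, and the log-Sobolev constant $6/d$ for $\mathrm U(d)$ then yields $\E_Ue^{nX}\le\exp\{12\kappa_\nu^2n^2/(dr^3)\}$. At $r\asymp\kappa_\nu n/d$ this exponent is still of order $y$. The paper instead takes $r=\lceil\max\{1,(18\kappa_\nu^2n^2/d)^{1/3}\}\rceil$. Then $1+\chi^2\le e^{2/3}<2$, $\TV<1/2$, $r<d$, and $\log(d/r)\ge\frac23\log y-\log(2\cdot18^{1/3})\ge\frac13\log y$ for $y\ge150$. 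Le Cam's $\delta^2/8$ then gives exactly $\frac1{72}\log^2y$. Only a constant fraction of $\log y$ is needed, so the lossy concentration suffices.

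\textbf{Rescuing your sharper rank.} If you want to keep $r=\lceil 2\kappa_\nu n/d\rceil$, replace the good/bad-event transfer by Jensen's inequality. Take $G$ to be $d\times r$ with i.i.d.\ standard complex Gaussian entries. Then $W$ is Haar on the Stiefel manifold and independent of $S=G^*G$, with $\E S=dI_r$, so
\[
\E_Ue^{nX}\le\E\exp\{n\tr(G^*BG)/(rd)\}=\prod_j\{1-nb_j/(rd)\}^{-r}\le\exp\{n^2\norm{B}_F^2/(rd^2)\}\le e^{1/4}.
\]
Here $b_j$ are the eigenvalues of $B$, $\sum_j b_j=0$, and $\max_j|nb_j|/(rd)\le1/2$. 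This closes your argument with the larger gap $\log(d/r)\ge\frac12\log y$.
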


The numerical constants $c_1=1/72$ and $c_2=1/150$ are chosen for a
simple finite-sample statement and are not asserted to be sharp. The stated
sample range is nonempty only when $d^2\ge150\kappa_\nu$. The
theorem imposes no informational-completeness condition. For a
noninformationally complete measurement, the bound remains valid but
need not be sharp; see Section~\ref{sec:discussion}.

\begin{corollary}
\label{cor:bounded}
Let $K\ge1$ and let $\nu$ satisfy $\kappa_\nu\le K$.
\begin{enumerate}
\item[(i)] For $1\le n\le d^2/(150K)$,
$R_{n,d}^*(\nu)\ge\frac1{72}\log^2\{d^2/(Kn)\}$.
\item[(ii)] For every $\varepsilon>0$,
\[
  R_{n,d}^*(\nu)\le\varepsilon
  \quad\Longrightarrow\quad
  n\ge\frac{d^2}K\min\left\{\frac1{150},\,e^{-\sqrt{72\varepsilon}}\right\}.
\]
\end{enumerate}
Both statements hold with $K=1$ for every exact complex projective
$2$-design, in particular for the Haar covariant measurement in every
dimension and for the global Clifford measurement when $d=2^q$, and with
$K=1+\eta$ for every $\nu$ with
$\norm{(d+1)\Frz-I}_{2\to2}\le\eta$.
\end{corollary}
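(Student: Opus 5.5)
Both parts follow from Theorem~\ref{thm:main} by monotonicity of $\log^2$ on $[1,\infty)$, together with the frame-spread bounds in Lemma~\ref{lem:frame}. The work is bookkeeping of inequalities. The only point needing care is the list of measurements in the last sentence.

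For (i), I will fix $n$ with $1\le n\le d^2/(150K)$. Since $\kappa_\nu\le K$, we have $d^2/(150K)\le d^2/(150\kappa_\nu)$, so $n$ lies in the range of Theorem~\ref{thm:main}. That theorem gives $R_{n,d}^*(\nu)\ge\frac1{72}\log^2\{d^2/(\kappa_\nu n)\}$. Next I will compare the two arguments of the logarithm:
\[
  \frac{d^2}{\kappa_\nu n}\ \ge\ \frac{d^2}{Kn}\ \ge\ 150\ >\ 1 .
\]
Since $x\mapsto\log^2x$ is increasing on $[1,\infty)$, this yields $\log^2\{d^2/(\kappa_\nu n)\}\ge\log^2\{d^2/(Kn)\}$, which is (i).

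For (ii), I will argue by contraposition. Suppose $R_{n,d}^*(\nu)\le\varepsilon$ but $n<(d^2/K)\min\{1/150,e^{-\sqrt{72\varepsilon}}\}$. Then $n<d^2/(150K)$, and $n\ge1$ because $n$ is a sample size, so (i) applies. Moreover $d^2/(Kn)>e^{\sqrt{72\varepsilon}}\ge1$, so $\log\{d^2/(Kn)\}>\sqrt{72\varepsilon}\ge0$. Squaring gives
\[
  R_{n,d}^*(\nu)\ \ge\ \tfrac1{72}\log^2\{d^2/(Kn)\}\ >\ \varepsilon,
\]
which is a contradiction. If the right-hand threshold is below $1$, the implication holds trivially, so no separate case is needed.

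For the final sentence, I will apply (i) and (ii) with $K=\kappa_\nu$ or an upper bound for it.
\begin{itemize}
\item For an exact complex projective $2$-design, Lemma~\ref{lem:frame}(ii) gives $\kappa_\nu=1$.
\item The Haar measure on pure states is a $2$-design by Schur's lemma: $\int P_z^{\otimes2}$ is proportional to the projector onto the symmetric subspace, which equals $(I+\mathsf F)/\{d(d+1)\}$.
\item For the global Clifford protocol with $d=2^q$, the recorded pair is statistically equivalent to a uniform stabilizer state (Supplementary Material). Uniform stabilizer states form a projective $2$-design, indeed a $3$-design \cite{Zhu2017}.
\item For $\nu$ with $\norm{(d+1)\Frz-I}_{2\to2}\le\eta$, Lemma~\ref{lem:frame}(iii) gives $\kappa_\nu\le1+\eta$.
\end{itemize}
The only step that is not purely mechanical is identifying the Clifford measurement with the stabilizer design. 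I will take this from the cited references rather than reprove it.
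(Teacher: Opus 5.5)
Your proposal is correct and follows the paper's proof essentially step for step: (i) uses $\kappa_\nu\le K$ to enter the range of Theorem~\ref{thm:main} and then monotonicity of $\log^2$ on $[1,\infty)$, and (ii) uses the same split on whether $n\le d^2/(150K)$, which you phrase as a contraposition. The final sentence is handled in both via Lemma~\ref{lem:frame}(ii),(iii) and the stabilizer-design identification of Section~\ref{supp-sec:models}; your Schur's-lemma remark that Haar measure on pure states is a $2$-design is a harmless addition that the paper leaves implicit.
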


\begin{corollary}
\label{cor:two-sided}
Let $\kappa_\nu\le K$, let $0<a<2$, and let
$1\le n\le d^{2-a}/(150K)$. Then
\[
  \frac{a^2}{72}\log^2d
  \le R_{n,d}^*(\nu)
  \le\frac14\log^2d .
\]
\end{corollary}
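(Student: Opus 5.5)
The statement to prove is Corollary~\ref{cor:two-sided}. The lower bound comes straight from Theorem~\ref{thm:main}, and the upper bound from a constant estimator.

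\textbf{Lower bound.} Since $\kappa_\nu\le K$ and $n\le d^{2-a}/(150K)\le d^2/(150\kappa_\nu)$, the hypothesis of Theorem~\ref{thm:main} holds. That theorem gives
\[
  R_{n,d}^*(\nu)\ge\tfrac1{72}\log^2\{d^2/(\kappa_\nu n)\}.
\]
The argument of the logarithm satisfies
\[
  \frac{d^2}{\kappa_\nu n}\ge\frac{d^2}{K n}\ge 150\,d^{a}\ge d^a\ge1 .
\]
The square of the logarithm is monotone on $[1,\infty)$. Hence $\log^2\{d^2/(\kappa_\nu n)\}\ge a^2\log^2 d$, which is the stated lower bound. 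The only care needed is to note that the logarithm is nonnegative, so that squaring preserves the inequality.

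\textbf{Upper bound.} Take the constant estimator $\widehat S\equiv\tfrac12\log d$; it is trivially measurable, so it lies in $\mathcal E_n$. For every $\rho\in\Theta_d$ we have $0\le S(\rho)\le\log d$. Therefore $|\widehat S-S(\rho)|\le\tfrac12\log d$ deterministically, and
\[
  R_{n,d}(\widehat S;\nu)\le\tfrac14\log^2 d .
\]
Taking the infimum over estimators gives $R_{n,d}^*(\nu)\le\tfrac14\log^2 d$.

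There is no real obstacle here. The only point to check is that the range condition $n\ge1$ together with $n\le d^{2-a}/(150K)$ is consistent. If that range is empty, the statement is vacuous.
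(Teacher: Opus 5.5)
Your proposal is correct and follows the paper's proof essentially verbatim. You verify the range condition of Theorem~\ref{thm:main} and bound $d^2/(\kappa_\nu n)\ge d^a$ to get the lower bound. You then use the constant estimator $\widehat S\equiv\frac12\log d$ together with $S(\rho)\in[0,\log d]$ for the upper bound.
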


The constant estimator $\widehat S\equiv\frac12\log d$ attains the upper
bound in Corollary~\ref{cor:two-sided}. Thus, in every polynomially
subquadratic regime covered by the corollary, the minimax risk is
comparable to the squared diameter of the entropy range. The corollary
makes no assertion about the transition region $n\asymp d^2$.

\begin{proposition}
\label{prop:upper}
There are universal constants $C,C'>0$ such that the following holds
for every $d\ge2$, every finite exact complex projective $2$-design with
uniform design measure (equal weights on the design points), and the Haar
covariant measurement. If
$n\ge Cd^3\log d$, the plug-in
entropy $\widehat S_n=S(\widehat\rho_n)$ of the projected least-squares
estimator $\widehat\rho_n$ of \cite{GKKT2020} satisfies
\[
  \sup_{\rho\in\Theta_d}\E_\rho\{\widehat S_n-S(\rho)\}^2
  \le C'\left\{\frac{d^3\log^3d}{n}
  +\left(\frac{d^3\log d}{n}\right)^{1/2}\right\}.
\]
\end{proposition}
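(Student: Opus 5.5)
Proposition~\ref{prop:upper} follows by combining a Frobenius- or trace-norm tomography guarantee for projected least squares with a continuity bound for the von Neumann entropy, and then integrating a tail bound.

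\emph{Step 1 (tomography tail bound).} For the measurements listed, I would invoke the projected least-squares guarantee of \cite{GKKT2020}. Finite equal-weight exact $2$-designs and the Haar covariant measurement are exactly the frames for which their linear inversion estimator is $\widehat L_n=\{(d+1)\widehat M_n - I\}$-type, built from the empirical average of the $P_{Z_i}$. Their result gives, for all $\rho\in\Theta_d$ and $\delta\in(0,1)$,
\[
  \PP_\rho\Bigl\{\norm{\widehat\rho_n-\rho}_1\ge \tau\Bigr\}\le d\,e^{-c n\tau^2/d^3}.
\]
This is the rank-adaptive trace-norm version; the plain version replaces $d^3$ by $d^2 r$ with $r\le d$.

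For the Haar measurement, the average of $n$ i.i.d.\ bounded matrices $d(d+1)P_Z-I$ is handled by the same matrix Bernstein argument. The relevant variance proxy is bounded by $O(d)$ in operator norm, which only uses the $2$-design identity in Lemma~\ref{lem:frame}(ii). So I will state one tail bound for both cases and verify the Haar case by rerunning the Bernstein step.

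\emph{Step 2 (entropy continuity).} By the Fannes--Audenaert inequality, with $T=\frac12\norm{\widehat\rho_n-\rho}_1$,
\[
  |S(\widehat\rho_n)-S(\rho)|\le T\log(d-1)+h(T)\le T\log d+h(T).
\]
Here $h$ is the binary entropy, which satisfies $h(T)\le 2\sqrt{T}$ (up to a constant) or $h(T)\le T\log(e/T)$. Since $\widehat\rho_n$ is a density matrix, the error is also always bounded by $\log d$.

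\emph{Step 3 (integration).} Square the continuity bound to get
\[
  \{\widehat S_n-S\}^2\lesssim T^2\log^2 d + h(T)^2,
\]
and bound $h(T)^2\lesssim T$ (using $h(T)\le 2\sqrt T$ crudely). Then integrate the tail bound. Writing $\tau_0^2=Cd^3\log d/n$, the tail gives
\[
  \E T^2\lesssim \tau_0^2+\frac{d^3}{n},
\]
which yields the term $d^3\log^3 d/n$ after multiplying by $\log^2 d$. It also gives $\E T\lesssim \tau_0$, which yields the $(d^3\log d/n)^{1/2}$ term. On the bad event of probability at most $d^{-2}$, I would use the trivial bound $\log^2 d$, contributing $\log^2d/d^2\lesssim d^3\log^3d/n$ whenever $n\le d^5$. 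For larger $n$, I would choose the threshold with $\delta=n^{-1}$ instead; the $\log$ factor then becomes $\log(dn)$, and I would absorb it by requiring $n\ge Cd^3\log d$ and noting that the $\log n$ terms remain dominated. This bookkeeping is mildly fiddly.

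\emph{Main obstacle.} The hard part is Step 1 for equal-weight finite designs: making sure the cited \cite{GKKT2020} bound is stated for this exact measurement model and in trace norm with the $d^3$ scaling. If only a Frobenius bound $\norm{\widehat\rho_n-\rho}_F\lesssim\sqrt{d\log d/n}\cdot\sqrt d$ is available, I would convert it using $\norm{\cdot}_1\le\sqrt{2r}\norm{\cdot}_F$ after projection. That conversion is lossy by $\sqrt d$, which gives exactly the $d^3$ rate. Beyond that, the rest is routine integration of a sub-Gaussian tail.
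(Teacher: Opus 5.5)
Your main line is the paper's proof. You take a trace-norm tail bound from \cite{GKKT2020} with the rank bounded by $d$, integrate it to get $\E_\rho T^2\lesssim d^3\log d/n$, apply Audenaert with $h(T)\le2\sqrt T$, square, and use Jensen for $\E_\rho T$.

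The one real variation is the Haar case. The paper cites the separate Haar theorem of \cite{GKKT2020}, with tail $\exp(2.2d-n\varepsilon^2/(480r^2))$; you rerun matrix Bernstein instead. That works because the Haar measure is a $2$-design. The summands should be $(d+1)P_{Z_i}-I-\rho$, not $d(d+1)P_{Z_i}-I$, which has the wrong mean. They have operator norm at most $d$ and variance proxy at most $2d$. Combined with the deterministic projection bound of \cite{GKKT2020}, this gives a tail $2d\exp(-cn\tau^2/d^3)$. So both measurement classes fall under one bound, at the cost of reproving a step the paper takes off the shelf.

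You should drop the bad-event paragraph, because as written its large-$n$ patch fails. Choosing $\delta=n^{-1}$ puts $\log(dn)$ into the threshold, and $\log(dn)/\log d$ is unbounded in $n$. The second term would then become $\{d^3\log(dn)/n\}^{1/2}$, which is not within a universal constant of $(d^3\log d/n)^{1/2}$. Requiring $n\ge Cd^3\log d$ is a lower bound on $n$ and cannot make the $\log n$ terms dominated.

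No good/bad split is needed at all. The continuity inequality holds pointwise, so
$\E_\rho\{\widehat S_n-S(\rho)\}^2\le2\log^2d\,\E_\rho T^2+8(\E_\rho T^2)^{1/2}$.
The levels above your $\tau_0$ are already accounted for by the $d^3/n$ term in your bound for $\E_\rho T^2$.

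The only extra check is the top range where the quoted tail bound is not stated ($\varepsilon>1$). There
$\PP_\rho(\norm{\widehat\rho_n-\rho}_1\ge1)\le d\,e^{-cn/d^3}\le e^{-cn/(2d^3)}\lesssim d^3/n$
once $n\ge Cd^3\log d$. This probability decays with $n$ rather than being frozen at $d^{-2}$. It is exactly how the paper's integration lemma, using $X\le2$, $n\ge2\alpha\beta$ and $e^{-x}\le1/x$, covers all $n$ at once. Your own first integration sentence therefore already completes the proof.
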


Proposition~\ref{prop:upper} combines existing tomography and
entropy-continuity bounds to give a nonmatching achievable regime. For the
measurements covered by the proposition, let
$N_\tau(d,\nu)=\inf\{n\in\mathbb N:R_{n,d}^*(\nu)\le\tau\}$ for $\tau>0$.
For every fixed $0<\tau<\frac1{72}\log^2(150)$, Theorem~\ref{thm:main} and
Proposition~\ref{prop:upper} imply
\[
  \left\lfloor\frac{d^2}{150}\right\rfloor<N_\tau(d,\nu)
  \le C_\tau d^3\log^3d,
\]
where $C_\tau$ depends only on $\tau$. The minimax rate in the intermediate
regime remains open.

The proof of Theorem~\ref{thm:main} proceeds by comparing the maximally
mixed state with a Haar mixture of flat rank-$r$ states. Fix a rank-$r$
projector $\Pi_r$ with $r\le d$ chosen below, and let
$\pi_r$ be the law of $\rho_U=U\Pi_rU^*/r$ for $U$ Haar distributed on
the unitary group $\mathrm U(d)$. Every state in the support of $\pi_r$
has entropy $\log r$, and the null state $I/d$ has entropy $\log d$, so
the two hypotheses are separated by $\delta=\log(d/r)$. Write
$A_U=\rho_U-I/d$, so that $\norm{A_U}_F^2=1/r-1/d$, and define the
kernel $K_\nu(U,V)=d\tr\{A_U\Frz(A_V)\}$ for two independent prior
draws. By \eqref{eq:mixture-chi2},
\[
  1+\chi^2\!\left(P_{\pi_r}^{(n)}\,\middle\|\,P_{I/d}^{(n)}\right)
  =\E_{U,V}\{1+K_\nu(U,V)\}^n\le\E_V\E_U\exp\{nK_\nu(U,V)\}.
\]
Conditionally on $V$, the map $U\mapsto K_\nu(U,V)=\tr(U\Lambda U^*B_V)$,
with $\Lambda=\Pi_r/r$ and $B_V=d\Frz(A_V)$, has mean zero and is
Lipschitz with constant at most
$2\norm\Lambda_{\mathrm{op}}\norm{B_V}_F\le2\kappa_\nu r^{-3/2}$ in the
Frobenius metric. The logarithmic Sobolev inequality for Haar measure on
$\mathrm U(d)$ \cite[Theorem~5.16]{Meckes2019} then gives
\[
  \E_U\exp\{nK_\nu(U,V)\}
  \le\exp\!\left(\frac{12\kappa_\nu^2n^2}{dr^3}\right)
\]
uniformly in $V$. Taking $r=\lceil\max\{1,(18\kappa_\nu^2n^2/d)^{1/3}\}
\rceil$ makes the right side at most $e^{2/3}<2$, so the chi-square
divergence is at most one and the total variation distance at most
$1/2$. On the range $n\le d^2/(150\kappa_\nu)$ this choice satisfies
$r\le d$ and $\delta\ge\frac13\log\{d^2/(\kappa_\nu n)\}$. Le Cam's
two-point inequality gives $R_{n,d}^*(\nu)\ge\delta^2/8$. Details,
including the corollaries and Proposition~\ref{prop:upper}, are in
\suppsection{supp-sec:proof} and \suppsection{supp-sec:upper}.

\section{Discussion}
\label{sec:discussion}

\paragraph{Relation to single-copy testing bounds}
Mixedness testing distinguishes $I/d$ from states at trace distance at least
$\varepsilon$. Liu and Acharya \cite{LiuAcharya2024} proved that its optimal
copy complexity under fixed deterministic unentangled measurements is
$\Theta(d^2/\varepsilon^2)$. Their protocol may use a predetermined sequence
of POVMs, so one fixed repeated POVM is a special case. For a finite rank-one POVM
$M_x=w_xP_x$, its L\"uders channel is
$\mathcal H_M(A)=\sum_xw_x\tr(AP_x)P_x$, which is $\Fr(A)$ under the
normalization in Section~\ref{sec:model}. Thus the channel appearing in
\cite{LiuAcharya2024} is the frame operator used in the present analysis.
The quantum Pinsker inequality gives
$\log d-S(\rho)\ge\frac12\norm{\rho-I/d}_1^2$. If an entropy estimator has
uniform mean squared error at most $\varepsilon^4/48$, thresholding it at
$\log d-\varepsilon^2/4$ gives a mixedness tester with both error
probabilities at most $1/3$. Thus, for a fixed $\varepsilon$ in the range of
the testing theorem, the fixed-risk quadratic scale already follows from
\cite{LiuAcharya2024}. This bounded-distance reduction cannot produce a risk
that diverges with $d$. Theorem~\ref{thm:main} additionally combines the
largest trace-zero eigenvalue of the channel with flat rank-$r$ alternatives,
whose entropy gap is $\log(d/r)$, yielding the curve
$\log^2\{d^2/(\kappa_\nu n)\}$. Randomized or adaptive single-copy testing has
complexity of order $d^{3/2}/\varepsilon^2$
\cite{BCL2020,ChenHuangLiLiu2022}; the common random basis in
\cite[Appendix~A]{BCL2020} lies outside the repeated fixed-POVM model. The
bounds of \cite{LoweNayak2025} concern tomography and shadow tomography.

\paragraph{Scope}
The theorem concerns one fixed rank-one POVM applied independently to
each copy. Adaptive protocols, protocols that reuse common randomness, and
collective measurements do not generate the repeated product law assumed by
the theorem. When $\kappa_\nu$ is of order $d$ the bound can be conservative:
for one orthonormal basis, $\kappa_\nu=d+1$, yet $I/d$ and a pure state that is
unbiased with respect to that basis have the same law, giving minimax risk at
least
$\frac14\log^2d$ for every $n$. Finite higher-rank POVMs are covered through
rank-one refinement; see
\suppsection{supp-sec:scope}.

\paragraph{Relation to unrestricted experiments}
For collective measurements, \cite{AISW2020} proved a dimension-quadratic
lower bound for the empirical Young diagram estimator and left the
unrestricted problem open. Fanizza et al.~\cite{FOW2026} report a
$d^{2-\gamma}$ lower bound for every $\gamma>0$ at constant error, and Wang
\cite{Wang2026} reports an $\varepsilon$-dependent bound whose
constant-accuracy specialization is $d^2/\log^4d$. These bounds also apply to
fixed measurements. Theorem~\ref{thm:main} adds the frame-dependent scale
$d^2/\kappa_\nu$ without a dimension-dependent polylogarithmic loss and its
finite-sample risk curve. Gao and Wang \cite{GaoWang2026} give an unrestricted
constant-accuracy estimator with success probability at least $2/3$. For a
mean-squared-error comparison, clipping its output to $[0,\log d]$ and taking
the median of $O(\log\log d)$ independent runs reduces the failure probability
to
$O(1/\log^2d)$, giving bounded mean squared error with
$O\{d^2(\log\log d)^3/\log^2d\}$ copies. For sample sizes of this order and
all sufficiently large $d$, Theorem~\ref{thm:main} gives risk of order at least
$\log^2\log d$ for every bounded-frame-spread measurement. This
growing-risk separation pertains to the fixed-measurement model.

\clearpage
\setcounter{section}{0}
\setcounter{equation}{0}
\renewcommand{\thesection}{S\arabic{section}}
\renewcommand{\theequation}{S\arabic{equation}}
\renewcommand{\theHsection}{supp.\arabic{section}}
\renewcommand{\theHequation}{supp.\arabic{equation}}

\begin{center}
{\Large\bfseries Supplementary Material}\par
\vspace{0.5em}
{\large A Frame-Spread Lower Bound for Quantum Entropy Estimation
under Fixed Rank-One Measurements}\par
\vspace{0.5em}
Xinyu Song
\end{center}

\medskip
Throughout, $\nu$ is a probability measure on rank-one projectors
satisfying $\int P_z\,\nu(\dd z)=I/d$, $p_\rho(z)=d\tr(\rho P_z)$ is the
one-observation density with respect to $\nu$, $\Fr$ is the frame
operator, $\Frz$ its restriction to the trace-zero subspace $\Hdz$, and
$\kappa_\nu=(d+1)\norm{\Frz}_{2\to2}$, as in Section~2 of the main
article. Logarithms are natural.

\section{Randomized bases and rank-one POVMs}
\label{supp-sec:models}

This section shows that the randomized-basis record and the rank-one POVM
observation used in the main article define statistically equivalent
experiments. Let $\mu$ be a probability
measure on unitary matrices, and define
\[
  P_{U,b}=U^*\ket b\bra bU,\qquad b\in\{1,\ldots,d\}.
\]
In the randomized-basis experiment, $U\sim\mu$ is drawn, a fresh copy
of $\rho$ is measured in the basis determined by $U$, and the pair
$(U,b)$ is recorded. Its density with respect to the product of
$\mu(\dd U)$ and counting measure on $b$ is
$f_\rho(U,b)=\tr(\rho P_{U,b})$. Let $\nu_\mu$ be the law of $P_{U,B}$
when $U\sim\mu$ and $B$ is independent and uniform on
$\{1,\ldots,d\}$. For any bounded measurable $h$,
\[
  \int h(P)\,\nu_\mu(\dd P)
  =\frac1d\int\sum_{b=1}^d h(P_{U,b})\,\mu(\dd U).
\]
The induced projector $P=P_{U,b}$ has density $p_\rho(P)=d\tr(\rho
P)$ with respect to $\nu_\mu$. Conditional on $P$, the distribution on
the fiber $\{(U,b):P_{U,b}=P\}$ is independent of $\rho$, because
$f_\rho(U,b)$ depends on $(U,b)$ only through $P_{U,b}$. The direction
$P$ is therefore sufficient, and the two experiments are statistically
equivalent. Since $\sum_b P_{U,b}=I$ for every $U$, the measure
$\nu_\mu$ satisfies the normalization $\int P\,\nu_\mu(\dd P)=I/d$.

If $\mu$ is Haar, then $\nu_\mu$ is Haar measure on pure states. If
$\mu$ is uniform on the multiqubit Clifford group, then $\nu_\mu$ is
uniform on stabilizer states. Stabilizer states form an exact complex
projective $3$-design \cite{Zhu2017} and hence an exact projective
$2$-design.

\section{Frame operator and affinity identities}
\label{supp-sec:frame}

\begin{proof}[Proof of Lemma~\ref{lem:frame} in the main article]
(i) For $A,B\in\Hd$,
\[
  \langle B,\Fr(A)\rangle
  =d\int\tr(AP_z)\tr(BP_z)\,\nu(\dd z),
\]
which is symmetric in $(A,B)$ and nonnegative when $A=B$. Hence $\Fr$
is self-adjoint and positive semidefinite. For the upper bound, write
$P_z=\ket z\bra z$. Since $\langle z|A|z\rangle^2\le\langle z|A^2|z
\rangle$ by the Cauchy--Schwarz inequality,
\[
  \langle A,\Fr(A)\rangle
  =d\int\langle z|A|z\rangle^2\,\nu(\dd z)
  \le d\int\langle z|A^2|z\rangle\,\nu(\dd z)
  =\tr\!\left(A^2\,d\!\int P_z\,\nu(\dd z)\right)=\tr(A^2).
\]
Thus $\norm{\Frz}_{2\to2}\le1$ and $\kappa_\nu\le d+1$. For the lower
bound, let $\{E_k\}_{k=1}^{d^2}$ be an orthonormal basis of $\Hd$ with
$E_1=I/\sqrt d$. By Parseval's identity,
\[
  \sum_{k=1}^{d^2}\langle E_k,\Fr(E_k)\rangle
  =d\int\sum_k\tr(E_kP_z)^2\,\nu(\dd z)
  =d\int\norm{P_z}_F^2\,\nu(\dd z)=d .
\]
The term $k=1$ equals $\tr(I\cdot I)/d=1$, so the trace of $\Frz$ on
the $(d^2-1)$-dimensional space $\Hdz$ equals $d-1$. Its largest
eigenvalue is at least the average eigenvalue $(d-1)/(d^2-1)=1/(d+1)$,
which gives $\kappa_\nu\ge1$.

(ii) Taking the partial trace over the first tensor factor in the
$2$-design identity gives, for every $A\in\Hd$,
\[
  \Fr(A)=d\,\tr_1\!\left\{(A\otimes I)\int P_z^{\otimes2}\,\nu(\dd z)
  \right\}
  =\frac{\tr_1\{(A\otimes I)(I+\mathsf F)\}}{d+1}
  =\frac{\tr(A)\,I+A}{d+1},
\]
because $\tr_1\{(A\otimes I)\mathsf F\}=A$. On $\Hdz$ this is
$A/(d+1)$, so $\Frz=(d+1)^{-1}I$ and $\kappa_\nu=1$.

(iii) By the triangle inequality,
$(d+1)\norm{\Frz}_{2\to2}\le\norm{I}_{2\to2}+\eta=1+\eta$.

(iv) Since $\int p_\rho p_\sigma\,\dd\nu=d^2\int\tr(\rho P_z)\tr(\sigma
P_z)\,\nu(\dd z)=d\,\tr\{\rho\Fr(\sigma)\}$, write $\rho=I/d+A$ and
$\sigma=I/d+B$ with $A,B\in\Hdz$. Then $\Fr(\sigma)=I/d+\Frz(B)$ and,
using $\tr A=0$ and $\tr\Frz(B)=\tr B=0$,
\[
  \tr\{\rho\Fr(\sigma)\}
  =\frac1d+\frac{\tr\Frz(B)}d+\frac{\tr A}d+\tr\{A\Frz(B)\}
  =\frac1d+\tr\{A\Frz(B)\}.
\]
Multiplying by $d$ proves \eqref{eq:affinity}.
\end{proof}

\begin{lemma}[Mixture chi-square identity]
\label{supp-lem:chi2}
Let $P_\rho^{(n)}$ be the law of $n$ independent observations with
density $p_\rho$, and let $\pi$ be any prior on density matrices. Then
\eqref{eq:mixture-chi2} holds.
\end{lemma}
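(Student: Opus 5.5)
The plan is to compute the chi-square divergence directly from its definition as an integral of the squared likelihood ratio against the null law, and then to evaluate that integral with Fubini's theorem and independence.

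\textbf{Densities.} Under $\rho_0=I/d$ the one-observation density with respect to $\nu$ is identically one. Hence $P_{I/d}^{(n)}=\nu^{\otimes n}$, and $P_\rho^{(n)}$ has density $\prod_{i=1}^n p_\rho(z_i)$ with respect to $\nu^{\otimes n}$. The mixture $P_\pi^{(n)}$ then has density
\[
q(z_1,\dots,z_n)=\int\prod_{i=1}^n p_\rho(z_i)\,\pi(\dd\rho).
\]
Measurability of $(\rho,z)\mapsto p_\rho(z)=d\tr(\rho P_z)$ is clear because it is continuous in $\rho$ and measurable in $z$. So $q$ is a genuine density, and
\[
1+\chi^2\bigl(P_\pi^{(n)}\,\|\,P_{I/d}^{(n)}\bigr)=\int q^2\,\dd\nu^{\otimes n}.
\]

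\textbf{Squaring and interchanging.} I will write $q^2$ as a double integral over independent draws $\rho,\rho'\sim\pi$:
\[
q^2=\iint\prod_{i=1}^n p_\rho(z_i)\,p_{\rho'}(z_i)\,\pi(\dd\rho)\,\pi(\dd\rho').
\]
All integrands are nonnegative, since $p_\rho\ge0$, so Tonelli's theorem allows the $\nu^{\otimes n}$ integral to be moved inside. The product structure then factorizes it as
\[
\Bigl(\int p_\rho p_{\rho'}\,\dd\nu\Bigr)^n.
\]

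\textbf{Applying the affinity identity.} Lemma~\ref{lem:frame}(iv), i.e.\ \eqref{eq:affinity}, replaces each factor by $1+d\tr\{(\rho-I/d)\Frz(\rho'-I/d)\}$, which yields \eqref{eq:mixture-chi2}.

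There is no real obstacle. The only point requiring care is justifying the interchange of integrals, and nonnegativity via Tonelli's theorem handles it. The identity also holds with value $+\infty$ allowed, although here every factor is bounded by $d$, so everything is finite.
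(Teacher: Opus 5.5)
Your proposal is correct and follows the paper's proof essentially step for step. You use $p_{I/d}\equiv1$ to write $1+\chi^2$ as the $\nu^{\otimes n}$-integral of the squared mixture density, expand the square over two independent prior draws, interchange integrals, factorize by independence, and apply \eqref{eq:affinity} to each factor; your explicit use of Tonelli via nonnegativity and your remark on joint measurability are slightly more careful than the paper's bare appeal to Fubini, but the argument is the same.
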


\begin{proof}
Under $\rho_0=I/d$, $p_{\rho_0}\equiv1$. Expanding the squared mixture
likelihood ratio and applying Fubini's theorem,
\begin{align*}
  1+\chi^2\!\left(P_\pi^{(n)}\,\middle\|\,P_{\rho_0}^{(n)}\right)
  &=\int\left\{\int\prod_{i=1}^n p_\rho(z_i)\,\pi(\dd\rho)\right\}^2
  \nu^{\otimes n}(\dd z_1\cdots\dd z_n)\\
  &=\E_{\rho,\rho'\stackrel{\mathrm{iid}}{\sim}\pi}
  \prod_{i=1}^n\int p_\rho(z_i)p_{\rho'}(z_i)\,\nu(\dd z_i),
\end{align*}
and each factor equals
$1+d\tr\{(\rho-I/d)\Frz(\rho'-I/d)\}$ by
\eqref{eq:affinity}.
\end{proof}

\section{Proof of Theorem~\ref{thm:main} and its corollaries}
\label{supp-sec:proof}

\begin{lemma}[Unitary concentration]
\label{supp-lem:concentration}
Let $d\ge1$, let $U$ be Haar distributed on $\mathrm U(d)$, and let
$f:\mathrm U(d)\to\mathbb R$ be $L$-Lipschitz with respect to the
Frobenius (Hilbert--Schmidt) metric. Then, for every real $\lambda$,
\[
  \E\exp\{\lambda(f(U)-\E f(U))\}
  \le\exp\!\left(\frac{3\lambda^2L^2}{d}\right).
\]
\end{lemma}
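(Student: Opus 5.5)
The plan is to obtain the sub-Gaussian bound from the logarithmic Sobolev inequality for Haar measure on $\mathrm U(d)$ together with the Herbst argument. The relevant known result is Meckes, Theorem~5.16: Haar measure on $\mathrm U(d)$ with the Hilbert--Schmidt metric does not satisfy a log-Sobolev inequality with constant $O(1/d)$ directly, because of the determinant (central) direction. Instead, $\mathrm U(d)$ is written as a quotient $(\mathrm{SU}(d)\times\mathrm U(1))/\mathbb Z_d$ via $(V,\theta)\mapsto e^{i\theta}V$. Meckes' concentration statement for $\mathrm U(d)$ then reads: for $L$-Lipschitz $f$,
\[
\PP\{f(U)\ge\E f(U)+t\}\le e^{-dt^2/(12L^2)}.
\]
Equivalently, Haar measure on $\mathrm U(d)$ satisfies a log-Sobolev inequality with constant $6/d$.

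I would proceed in three steps.

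\textbf{Step 1.} Quote the log-Sobolev inequality. For the Haar measure on $\mathrm{SU}(d)$, Bakry--\'Emery applies since the Ricci curvature is $d/2$ in the Hilbert--Schmidt metric, giving constant $4/d$. For the circle factor, use the Gaussian-type log-Sobolev inequality on $\mathrm U(1)$. The two are tensorized on the product and pushed through the Lipschitz quotient map. This is the content of Meckes Theorem~5.16, which gives
\[
\mathrm{Ent}(g^2)\le\frac{2C}{d}\,\E|\nabla g|^2, \qquad C=6 .
\]

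\textbf{Step 2.} Run the Herbst argument. Apply the inequality to $g=e^{\lambda f/2}$ for smooth $L$-Lipschitz $f$, where $|\nabla f|\le L$. Setting $H(\lambda)=\E e^{\lambda f}$, this yields
\[
\lambda H'(\lambda)-H(\lambda)\log H(\lambda)\le \frac{C\lambda^2L^2}{2d}\,H(\lambda).
\]
Integrating the derivative of $\lambda^{-1}\log H(\lambda)$ from $0$, where its limit is $\E f$, gives
\[
\log\E e^{\lambda(f-\E f)}\le \frac{C\lambda^2L^2}{2d}=\frac{3\lambda^2L^2}{d}.
\]
This holds for all real $\lambda$, using $-f$ for negative $\lambda$.

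\textbf{Step 3.} Remove the smoothness assumption. For general Lipschitz $f$, approximate by smooth Lipschitz functions, for instance by convolution on the compact Lie group. The bound then passes to the limit by dominated convergence, since everything is bounded on the compact group.

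The main obstacle is Step 1, specifically getting the constant right so that the Herbst output is exactly $3/d$. The subtlety is the treatment of the $\mathrm U(1)$ factor, where the naive circle log-Sobolev constant does not scale like $1/d$. It requires the rescaling $\theta\mapsto e^{i\theta/d}$ in the coupling, so that the circle has circumference about $2\pi/\sqrt d$ in the induced metric. I would cite Meckes' Theorem~5.16 for the constant $6/d$ rather than rederive it, and only verify that its normalization of the log-Sobolev inequality, $\mathrm{Ent}(g^2)\le 2C\,\E|\nabla g|^2$, matches the Herbst computation above. If the cited constant turns out to differ, the lemma's $3$ would simply be adjusted. The downstream choice of $r$ in Theorem~\ref{thm:main} absorbs this through the exponent $12\kappa_\nu^2n^2/(dr^3)$, which equals $3n^2(2\kappa_\nu r^{-3/2})^2/d$.
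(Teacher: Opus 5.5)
Your proposal is correct and follows the paper's proof: cite Meckes' Theorem~5.16 for the log-Sobolev constant $6/d$ on $\mathrm U(d)$ in the normalization $\mathrm{Ent}(g^2)\le 2C\,\E|\nabla g|^2$, then run the Herbst argument to get $\exp(C\lambda^2L^2/2)=\exp(3\lambda^2L^2/d)$. A few of your side remarks are slightly off, but none of them enters the argument: in this normalization Bakry--\'Emery with Ricci curvature $d/2$ gives $2/d$ for $\mathrm{SU}(d)$, not $4/d$; the central direction blocks only the direct Bakry--\'Emery route, not the inequality itself; and Meckes' Theorem~5.17 states the weaker tail $\exp\{-(d-2)t^2/(24L^2)\}$.
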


\begin{proof}
By \cite[Theorem~5.16]{Meckes2019}, Haar measure on $\mathrm U(d)$
equipped with the Hilbert--Schmidt metric satisfies a logarithmic
Sobolev inequality with constant $C=6/d$, in the normalization
$\operatorname{Ent}(g^2)\le2C\,\E|\nabla g|^2$ used there. The Herbst
argument in the proof of \cite[Theorem~5.5]{Meckes2019} converts this
inequality into the bound $\E\exp\{\lambda(F-\E F)\}\le\exp(C\lambda^2/2)$
for every $1$-Lipschitz $F$ and real $\lambda$. Applying it to
$F=f/L$ gives the claim.
\end{proof}

\begin{remark}
Theorem~5.17 of \cite{Meckes2019} gives the corresponding tail inequality
$\PP\{f(U)\ge\E f(U)+t\}\le\exp\{-(d-2)t^2/(24L^2)\}$, with the factor
$d-2$ representing the worst case among the compact groups covered there.
For $\mathrm U(d)$, Theorem~5.16 gives logarithmic Sobolev constant $6/d$.
Lemma~5.15 treats the central phase direction by coupling Haar measures on
$\mathrm U(d)$ and $\mathrm{SU}(d)$, so the concentration inequality does
not require phase invariance. The kernel used below is also invariant
under $U\mapsto e^{i\theta}U$.
\end{remark}

Fix an integer $1\le r\le d$ and a rank-$r$ projector $\Pi_r$. For
$U\in\mathrm U(d)$ let
\[
  \rho_U=\frac1rU\Pi_rU^*,\qquad A_U=\rho_U-\frac Id,\qquad
  \norm{A_U}_F^2=\tr(\rho_U^2)-\frac1d=\frac1r-\frac1d\le\frac1r .
\]
Let $\pi_r$ be the law of $\rho_U$ for Haar distributed $U$, and for
independent Haar $U,V$ define
\[
  K_\nu(U,V)=d\,\tr\{A_U\Frz(A_V)\}.
\]

\begin{lemma}[Conditional concentration of the kernel]
\label{supp-lem:kernel}
For every $V\in\mathrm U(d)$ and every real $t$,
\[
  \E_U\exp\{tK_\nu(U,V)\}
  \le\exp\!\left(\frac{12\kappa_\nu^2t^2}{dr^3}\right).
\]
\end{lemma}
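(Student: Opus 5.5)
The plan is to apply Lemma~\ref{supp-lem:concentration} to the map $f_V(U)=K_\nu(U,V)$ for fixed $V$, with $\lambda=t$. Two facts are needed: $\E_U f_V(U)=0$, and $f_V$ is Lipschitz in the Frobenius metric with constant $L\le2\kappa_\nu r^{-3/2}$. Given these, the lemma yields $\exp\{3t^2L^2/d\}\le\exp\{12\kappa_\nu^2t^2/(dr^3)\}$, which is exactly the claim.

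\emph{Step 1 (rewriting the kernel).} Put $B_V=d\,\Frz(A_V)\in\Hdz$ and $\Lambda=\Pi_r/r$. Then $K_\nu(U,V)=\tr(A_UB_V)$. Since $\tr B_V=0$, the $-I/d$ part of $A_U$ contributes nothing, so
\[
K_\nu(U,V)=\tr(U\Lambda U^*B_V).
\]

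\emph{Step 2 (mean zero).} Haar invariance shows that $\E_U U\Lambda U^*$ commutes with every unitary. It is therefore a multiple of $I$, namely $\tr(\Lambda)I/d=I/d$. Hence
\[
\E_U K_\nu(U,V)=\tr(B_V)/d=0.
\]

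\emph{Step 3 (Lipschitz bound).} Write
\[
U\Lambda U^*-W\Lambda W^*=(U-W)\Lambda U^*+W\Lambda(U-W)^*.
\]
By Cauchy--Schwarz in the trace inner product, $|\tr(XB_V)|\le\|X\|_F\|B_V\|_F$. Also $\|(U-W)\Lambda U^*\|_F\le\|U-W\|_F\|\Lambda\|_{\mathrm{op}}$, and the second term obeys the same bound. Together these give
\[
|f_V(U)-f_V(W)|\le2\|\Lambda\|_{\mathrm{op}}\|B_V\|_F\|U-W\|_F.
\]
Here $\|\Lambda\|_{\mathrm{op}}=1/r$. For the other factor, since $A_V\in\Hdz$,
\[
\|B_V\|_F\le d\|\Frz\|_{2\to2}\|A_V\|_F=\frac{d\kappa_\nu}{d+1}\|A_V\|_F\le\kappa_\nu r^{-1/2},
\]
using $\|A_V\|_F^2\le1/r$. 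Hence $L\le2\kappa_\nu r^{-3/2}$ and $L^2\le4\kappa_\nu^2r^{-3}$.

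\emph{Step 4 (conclusion).} Substituting into Lemma~\ref{supp-lem:concentration} gives $3t^2L^2/d\le12\kappa_\nu^2t^2/(dr^3)$, as required.

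The main step needing care is the Lipschitz estimate in Step 3. One must keep the operator norm on $\Lambda$ and the Frobenius norm on $B_V$; a Frobenius norm on $\Lambda$ would cost a factor $r^{1/2}$. One must also use that $\Frz$ acts on trace-zero $A_V$, so that $\kappa_\nu$ enters rather than $\|\Fr\|$, which equals $1$ on $I$. The rest is direct substitution.
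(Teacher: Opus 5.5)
Your proposal is correct and follows essentially the same route as the paper's proof. You rewrite the kernel as $\tr(U\Lambda U^*B_V)$, use Haar invariance for mean zero, bound the Lipschitz constant by $2\|\Lambda\|_{\mathrm{op}}\|B_V\|_F\le2\kappa_\nu r^{-3/2}$ via the same splitting of $U\Lambda U^*-W\Lambda W^*$, and then apply Lemma~\ref{supp-lem:concentration} with $\lambda=t$.
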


\begin{proof}
Put $\Lambda=\Pi_r/r$ and $B_V=d\Frz(A_V)\in\Hdz$. Since $\tr B_V=0$,
\[
  K_\nu(U,V)=\tr\{(U\Lambda U^*-I/d)B_V\}=\tr(U\Lambda U^*B_V),
\]
and $\E_U(U\Lambda U^*)=\tr(\Lambda)I/d=I/d$ gives $\E_UK_\nu(U,V)=0$.
Moreover,
\[
  \norm{B_V}_F
  \le d\norm{\Frz}_{2\to2}\norm{A_V}_F
  =\frac{d\kappa_\nu}{d+1}\norm{A_V}_F
  \le\frac{\kappa_\nu}{\sqrt r}.
\]
For $U,U'\in\mathrm U(d)$ and $D=U-U'$,
\[
  \tr(U\Lambda U^*B_V)-\tr(U'\Lambda U'^*B_V)
  =\tr(D\Lambda U^*B_V)+\tr(U'\Lambda D^*B_V),
\]
and each term is bounded in absolute value by
$\norm D_F\norm{\Lambda}_{\mathrm{op}}\norm{B_V}_F$, because
$|\tr(D\Lambda U^*B_V)|\le\norm D_F\norm{\Lambda U^*B_V}_F\le\norm D_F
\norm\Lambda_{\mathrm{op}}\norm{U^*B_V}_F$ and $\norm{U^*B_V}_F=\norm{B_V}_F$.
Hence $U\mapsto K_\nu(U,V)$ is $L$-Lipschitz with
\[
  L\le2\norm\Lambda_{\mathrm{op}}\norm{B_V}_F
  \le\frac{2\kappa_\nu}{r^{3/2}} .
\]
Lemma~\ref{supp-lem:concentration} gives
$\E_U\exp\{tK_\nu(U,V)\}\le\exp(3t^2L^2/d)
\le\exp\{12\kappa_\nu^2t^2/(dr^3)\}$.
\end{proof}

\begin{proof}[Proof of Theorem~\ref{thm:main}]
Let $d\ge2$ and $1\le n\le d^2/(150\kappa_\nu)$, and write
\[
  y=\frac{d^2}{\kappa_\nu n}\ge150,\qquad
  a=\left(\frac{18\kappa_\nu^2n^2}{d}\right)^{1/3}
  =18^{1/3}\,d\,y^{-2/3},\qquad
  r=\lceil\max\{1,a\}\rceil .
\]

\emph{Step 1: the rank is admissible.} Since $y\ge150$,
$a\le18^{1/3}150^{-2/3}d<0.1d$. If $a\ge1$ then $r\le a+1\le2a<d$; if
$a<1$ then $r=1\le d$. In both cases $1\le r\le d$.

\emph{Step 2: chi-square control.} Consider the null $\rho_0=I/d$ and
the mixture alternative $\rho\sim\pi_r$. By Lemma~\ref{supp-lem:chi2},
$1+\chi^2(P_{\pi_r}^{(n)}\|P_{I/d}^{(n)})=\E_{U,V}\{1+K_\nu(U,V)\}^n$.
The quantity $1+K_\nu(U,V)$ is an integral of a product of two
nonnegative densities, hence nonnegative, so $(1+K)^n\le e^{nK}$.
Conditioning on $V$ and applying Lemma~\ref{supp-lem:kernel} with
$t=n$,
\[
  1+\chi^2\!\left(P_{\pi_r}^{(n)}\,\middle\|\,P_{I/d}^{(n)}\right)
  \le\exp\!\left(\frac{12\kappa_\nu^2n^2}{dr^3}\right).
\]
If $a\ge1$ then $r\ge a$ and $r^3\ge18\kappa_\nu^2n^2/d$, so the
exponent is at most $12/18=2/3$. If $a<1$ then $r=1$ and the exponent
equals $(12/18)a^3<2/3$. Hence $1+\chi^2\le e^{2/3}<2$, and
\[
  \TV\!\left(P_{\pi_r}^{(n)},P_{I/d}^{(n)}\right)
  \le\frac12\sqrt{\chi^2}<\frac12 .
\]

\emph{Step 3: entropy separation.} Every state in the support of
$\pi_r$ has entropy $\log r$ and the null state has entropy $\log d$;
let $\delta=\log(d/r)$. If $a\ge1$, Step~1 gives
$r\le2a=2\cdot18^{1/3}\,d\,y^{-2/3}$, so
\[
  \delta\ge\frac23\log y-\log(2\cdot18^{1/3})\ge\frac13\log y,
\]
because $\log y\ge\log150>4.97\ge3\log(2\cdot18^{1/3})$. If $a<1$ then
$r=1$ and $\delta=\log d$; since $\kappa_\nu n\ge1$ we have $y\le d^2$,
so $\delta\ge\frac12\log y\ge\frac13\log y$. In both cases
$\delta\ge\frac13\log\{d^2/(\kappa_\nu n)\}$.

\emph{Step 4: Le Cam reduction.} For any estimator $\widehat S$, let
$\E_0$ and $\E_1$ denote expectation under the null law and under the
mixture law. Since $|\log d-\log r|=\delta$, the elementary inequality
$x^2+(\delta-x)^2\ge\delta^2/2$ gives
$(\widehat S-\log d)^2+(\widehat S-\log r)^2\ge\delta^2/2$ at every data
point, so
\begin{align*}
  \E_0(\widehat S-\log d)^2+\E_1(\widehat S-\log r)^2
  &\ge\frac{\delta^2}{2}\int\min\{\dd P_{I/d}^{(n)},\dd P_{\pi_r}^{(n)}\}
  =\frac{\delta^2}{2}\left\{1-\TV(P_{I/d}^{(n)},P_{\pi_r}^{(n)})\right\}
  \ge\frac{\delta^2}{4}.
\end{align*}
Hence one of the two risks is at least $\delta^2/8$. Under the
alternative, the mixture risk is an average over states in the support
of $\pi_r$, all with entropy $\log r$, and is therefore bounded by the
worst-case risk. Taking the infimum over estimators and using Step~3,
\[
  R_{n,d}^*(\nu)\ge\frac{\delta^2}{8}
  \ge\frac1{72}\log^2\!\left(\frac{d^2}{\kappa_\nu n}\right).
\]
This is the claim with $c_1=1/72$.
\end{proof}

\begin{proof}[Proof of Corollary~\ref{cor:bounded}]
(i) If $n\le d^2/(150K)$ then $n\le d^2/(150\kappa_\nu)$, so
Theorem~\ref{thm:main} applies, and
$d^2/(\kappa_\nu n)\ge d^2/(Kn)\ge150>1$ gives
$\log^2\{d^2/(\kappa_\nu n)\}\ge\log^2\{d^2/(Kn)\}$.

(ii) Suppose $R_{n,d}^*(\nu)\le\varepsilon$. If $n\le d^2/(150K)$, part
(i) gives $\log\{d^2/(Kn)\}\le\sqrt{72\varepsilon}$, hence
$n\ge(d^2/K)e^{-\sqrt{72\varepsilon}}$. Otherwise $n>d^2/(150K)$. In
either case $n\ge(d^2/K)\min\{1/150,e^{-\sqrt{72\varepsilon}}\}$.

The statements for exact $2$-designs follow from
Lemma~\ref{lem:frame}(ii), for stabilizer states from
Section~\ref{supp-sec:models}, and for approximate designs from
Lemma~\ref{lem:frame}(iii).
\end{proof}

\begin{proof}[Proof of Corollary~\ref{cor:two-sided}]
If $n\le d^{2-a}/(150K)$ then $n\le d^2/(150\kappa_\nu)$ and
$d^2/(\kappa_\nu n)\ge150Kd^a/\kappa_\nu\ge d^a$, so
Theorem~\ref{thm:main} gives the lower bound
$\frac1{72}(a\log d)^2$. For the upper bound, $S(\rho)\in[0,\log d]$
for every $\rho\in\Theta_d$, so the constant estimator
$\widehat S\equiv\frac12\log d$ has squared error at most
$\frac14\log^2d$ at every state.
\end{proof}

\section{Proof of Proposition~\ref{prop:upper}}
\label{supp-sec:upper}

Let $\widehat\rho_n$ be the projected least-squares estimator of
\cite{GKKT2020}. It is obtained by projecting the linear-inversion estimate
onto $\Theta_d$ in Frobenius norm. For a finite exact $2$-design with uniform
design measure, called an equal-weight ``structured POVM'' with $g(d)=2d$ in
\cite{GKKT2020}, Theorem~1 of \cite{GKKT2020} states that for every
$\rho\in\Theta_d$ and $\varepsilon\in
[0,1]$,
\[
  \PP_\rho\!\left\{\norm{\widehat\rho_n-\rho}_1\ge\varepsilon\right\}
  \le d\exp\!\left(-\frac{n\varepsilon^2}{86\,d\,r^2}\right),
  \qquad r=\min\{\rank\rho,\rank\widehat\rho_n\}\le d .
\]
For the Haar covariant measurement, Theorem~2 of \cite{GKKT2020} gives
$\PP_\rho\{\norm{\widehat\rho_n-\rho}_1\ge\varepsilon\}\le\exp(2.2d-
n\varepsilon^2/(480r^2))$ for every $\varepsilon>0$. After bounding
$r\le d$, both cases have the form
\begin{equation}
  \PP_\rho\!\left\{\norm{\widehat\rho_n-\rho}_1\ge\varepsilon\right\}
  \le\exp\!\left(\alpha-\frac{n\varepsilon^2}{\beta}\right),
  \qquad\varepsilon\in(0,1],
  \label{supp-eq:tail}
\end{equation}
with $(\alpha,\beta)=(\log d,\,86d^3)$ for finite designs and
$(\alpha,\beta)=(2.2d,\,480d^2)$ for the Haar covariant measurement.

\begin{lemma}
\label{supp-lem:tomography-mse}
Let $X\in[0,2]$ satisfy $\PP(X\ge\varepsilon)\le\exp(\alpha-n
\varepsilon^2/\beta)$ for $\varepsilon\in(0,1]$, with $\alpha,\beta>0$
and $n\ge2\alpha\beta$. Then $\E X^2\le\beta(\alpha+7)/n$.
\end{lemma}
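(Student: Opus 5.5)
The plan is to use the layer-cake formula $\E X^2=\int_0^\infty 2\varepsilon\,\PP(X\ge\varepsilon)\,\dd\varepsilon$. We split the integral at a threshold $\varepsilon_0\in(0,1]$ where the tail bound becomes nontrivial, and treat the range $(1,2]$ separately, since the hypothesis only covers $\varepsilon\le1$.

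\emph{Choice of threshold.} Take $\varepsilon_0^2=\alpha\beta/n$, so that $\alpha-n\varepsilon^2/\beta\le0$ exactly when $\varepsilon\ge\varepsilon_0$. The assumption $n\ge2\alpha\beta$ gives $\varepsilon_0^2\le1/2$, so $\varepsilon_0<1$.

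\emph{Bulk range $[0,\varepsilon_0]$.} Bound the probability by one. This contributes at most $\varepsilon_0^2=\alpha\beta/n$.

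\emph{Middle range $[\varepsilon_0,1]$.} Use the tail hypothesis:
\[
  \int_{\varepsilon_0}^1 2\varepsilon\, e^{\alpha-n\varepsilon^2/\beta}\,\dd\varepsilon
  \le e^{\alpha}\,\frac{\beta}{n}\,e^{-n\varepsilon_0^2/\beta}
  =\frac{\beta}{n}.
\]
This is an exact Gaussian-type integral extended to infinity.

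\emph{Far range $(1,2]$.} Here we have no direct hypothesis. We use monotonicity instead: $\PP(X\ge\varepsilon)\le\PP(X\ge1)\le e^{\alpha-n/\beta}$. The interval contributes $\int_1^2 2\varepsilon\,\dd\varepsilon=3$, so this range is at most $3e^{\alpha-n/\beta}$.

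The main obstacle is to show $3e^{\alpha-n/\beta}\le C\beta/n$ with a clean constant.
\begin{itemize}
\item Since $n/\beta\ge2\alpha$, we have $\alpha-n/\beta\le-n/(2\beta)$.
\item Then $e^{-u}\le 1/(eu)$ for $u>0$, applied with $u=n/(2\beta)$, gives $3e^{-n/(2\beta)}\le 6\beta/(en)\le 3\beta/n$.
\end{itemize}

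Summing the three ranges gives
\[
  \E X^2\le\frac{\alpha\beta}{n}+\frac{\beta}{n}+\frac{3\beta}{n}=\frac{\beta(\alpha+4)}{n}\le\frac{\beta(\alpha+7)}{n},
\]
which is the claim, with slack in the constant. If the middle-range bound turns out to be less clean than expected, the slack from $4$ up to $7$ absorbs it.
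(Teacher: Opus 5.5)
Your proof is correct and is essentially the paper's argument: the same layer-cake split at $\varepsilon_0^2=\alpha\beta/n$, the same tail integral bounded by $\beta/n$ on $[\varepsilon_0,1]$, and the same monotonicity bound $3\,\PP(X\ge1)\le3e^{-n/(2\beta)}$ on $(1,2]$. The only difference is that you use $e^{-u}\le1/(eu)$ where the paper uses $e^{-x}\le1/x$, which gives the slightly sharper constant $\alpha+1+6/e\le\alpha+4$ in place of $\alpha+7$.
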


\begin{proof}
Put $\varepsilon_0^2=\alpha\beta/n\le1/2$. Then
\begin{align*}
  \E X^2
  &=\int_0^2 2\varepsilon\,\PP(X\ge\varepsilon)\,\dd\varepsilon
  \le\varepsilon_0^2
  +\int_{\varepsilon_0}^1 2\varepsilon
  \exp\!\left\{-\frac{n(\varepsilon^2-\varepsilon_0^2)}{\beta}\right\}
  \dd\varepsilon
  +3\,\PP(X\ge1)\\
  &\le\frac{\alpha\beta}{n}+\frac\beta n
  +3\exp\!\left\{-\frac{n(1-\varepsilon_0^2)}{\beta}\right\}
  \le\frac{\beta(\alpha+1)}n+3\exp\!\left(-\frac n{2\beta}\right)
  \le\frac{\beta(\alpha+7)}n ,
\end{align*}
using $e^{-x}\le1/x$ for $x>0$ in the last step.
\end{proof}

For $d\ge2$ one has $\log d\ge\log2$. In both cases of
\eqref{supp-eq:tail}, the hypothesis $n\ge2\alpha\beta$ therefore holds
whenever $n\ge3100\,d^3\log d$, and
Lemma~\ref{supp-lem:tomography-mse} gives
\begin{equation}
  \E_\rho\norm{\widehat\rho_n-\rho}_1^2
  \le\frac{4000\,d^3\log d}{n},
  \qquad\rho\in\Theta_d,\quad n\ge3100\,d^3\log d .
  \label{supp-eq:tomography-mse}
\end{equation}
For the finite-design bound,
$86(\log d+7)\le86\{1+7/\log2\}\log d\le955\log d$. For the Haar bound,
$480d^2(2.2d+7)\le2736d^3\le3948d^3\log d$.

\begin{lemma}
\label{supp-lem:continuity}
Let $d\ge2$, $\rho,\sigma\in\Theta_d$, and $T=\frac12\norm{\rho-\sigma}_1$.
Then $|S(\rho)-S(\sigma)|\le T\log d+2\sqrt T$.
\end{lemma}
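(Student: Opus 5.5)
The plan is to deduce the bound from the sharp continuity inequality of Fannes and Audenaert, and then bound the binary entropy term by an elementary calculus estimate. The Fannes--Audenaert inequality states that for density matrices $\rho,\sigma$ on $\C^d$ with $T=\frac12\norm{\rho-\sigma}_1\in[0,1]$,
\[
  |S(\rho)-S(\sigma)|\le T\log(d-1)+h(T),
  \qquad h(T)=-T\log T-(1-T)\log(1-T),
\]
with the convention $0\log0=0$. This is a standard result that I will cite rather than reprove. Since $d\ge2$, we have $\log(d-1)\le\log d$, so the first term is at most $T\log d$. It therefore remains only to show that $h(T)\le2\sqrt T$ for all $T\in[0,1]$.

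I will bound the two parts of $h$ separately, showing that each is at most $\sqrt T$.

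\emph{First part.} Write $u=\sqrt T\in[0,1]$. Then
\[
  -T\log T=\sqrt T\cdot\{-2u\log u\}.
\]
The function $u\mapsto-u\log u$ on $[0,1]$ attains its maximum $1/e$ at $u=1/e$. Hence $-2u\log u\le2/e<1$, and so $-T\log T\le\sqrt T$.

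\emph{Second part.} For $T\in[0,1)$ we have $-\log(1-T)\le T/(1-T)$. Multiplying by $1-T$ gives
\[
  -(1-T)\log(1-T)\le T\le\sqrt T.
\]
The endpoint $T=1$ is trivial, since the term vanishes there.

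Adding the two parts gives $h(T)\le2\sqrt T$. Combined with the Fannes--Audenaert inequality, this yields $|S(\rho)-S(\sigma)|\le T\log d+2\sqrt T$.

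The only nonroutine ingredient is the Fannes--Audenaert inequality itself. If a self-contained argument were preferred, I would first reduce to the commuting case: by Lidskii/Mirsky, the eigenvalue vectors of $\rho$ and $\sigma$, sorted in decreasing order, are at $\ell_1$ distance at most $\norm{\rho-\sigma}_1$. I would then apply the classical continuity bound for Shannon entropy, obtained by a coupling argument that uses Fano's inequality. I expect this step to be the main obstacle, and it is why citing the known result is the intended route.
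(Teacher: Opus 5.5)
Your proposal is correct and follows the paper's proof: cite Audenaert's sharp bound $|S(\rho)-S(\sigma)|\le T\log(d-1)+h(T)$, replace $\log(d-1)$ by $\log d$, and bound $h(T)\le 2\sqrt T$. The paper bounds the two parts of $h$ with the same estimates you use, namely $-T\log T\le(2/e)\sqrt T$ after writing $T=u^2$ and $-(1-T)\log(1-T)\le T\le\sqrt T$. The only extra remark in the paper is that Audenaert's statement is given in base two and is invariant under change of base.
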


\begin{proof}
The sharp continuity estimate of Audenaert \cite[Theorem~1]{Audenaert2007}
states that $|S(\rho)-S(\sigma)|\le T\log(d-1)+h(T)$ for all
$\rho,\sigma\in\Theta_d$, where $h(T)=-T\log T-(1-T)\log(1-T)$; the
statement is given there in base-two logarithms and is invariant under a
change of base. Writing $T=u^2$, $-T\log T=2u(-u\log u)\le2u/e$, and
$-(1-T)\log(1-T)\le T\le\sqrt T$ for $T\in[0,1]$, so
$h(T)\le(2/e+1)\sqrt T\le2\sqrt T$.
\end{proof}

\begin{proof}[Proof of Proposition~\ref{prop:upper}]
Let $n\ge3100\,d^3\log d$ and $T_n=\frac12\norm{\widehat\rho_n-\rho}_1$.
By Lemma~\ref{supp-lem:continuity} and $(x+y)^2\le2x^2+2y^2$,
\[
  \{\widehat S_n-S(\rho)\}^2\le2T_n^2\log^2d+8T_n ,
\]
and by \eqref{supp-eq:tomography-mse}, $\E_\rho T_n^2\le1000\,d^3\log
d/n$. Jensen's inequality gives $\E_\rho T_n\le(\E_\rho T_n^2)^{1/2}$,
so
\[
  \E_\rho\{\widehat S_n-S(\rho)\}^2
  \le\frac{2000\,d^3\log^3d}{n}
  +8\left(\frac{1000\,d^3\log d}{n}\right)^{1/2}
  \le2000\left\{\frac{d^3\log^3d}{n}
  +\left(\frac{d^3\log d}{n}\right)^{1/2}\right\}
\]
uniformly in $\rho$. This proves the proposition with $C=3100$ and
$C'=2000$.
\end{proof}

\section{Additional cases for the measurement model}
\label{supp-sec:scope}

\paragraph{Shared randomness across copies}
Let $\nu$ be an exact $2$-design, let $d$ be even, and take $r=d/2$ in
Lemma~\ref{supp-lem:kernel}. Then $\kappa_\nu=1$ and
\[
  1+\chi^2\!\left(P_{\pi_{d/2}}^{(n)}\,\middle\|\,P_{I/d}^{(n)}\right)
  \le\exp\!\left(\frac{96n^2}{d^4}\right),
\]
so the total variation distance between the Haar mixture of flat
rank-$d/2$ states and the null law tends to zero whenever $n=o(d^2)$. In
particular, no test based on independent outcomes of an exact $2$-design
uniformly separates $I/d$ from this alternative when $n=o(d^2)$. Every state
in the alternative satisfies
$\norm{\rho-I/d}_1=1$, and the mixedness tester of
\cite[Appendix~A]{BCL2020}, which draws one Haar-random orthonormal
basis, measures every copy in that basis, and applies a classical
uniformity test to the outcome frequencies, distinguishes $I/d$ from
any state with $\norm{\rho-I/d}_1\ge1$ using $O(d^{3/2})$ copies. The
second protocol is nonadaptive and uses single-copy measurements, but its
outcomes share the random basis across copies. It therefore lies outside the
independent fixed-POVM experiment of the main theorem.

\paragraph{A single orthonormal basis}
Let $\nu$ be uniform on the projectors $\ket b\bra b$ of one
orthonormal basis. Then $\Fr(A)=\sum_b\langle b|A|b\rangle\ket b\bra b$
is the pinching onto the diagonal, which fixes every traceless diagonal
matrix, so $\norm{\Frz}_{2\to2}=1$ and $\kappa_\nu=d+1$.
Theorem~\ref{thm:main} then yields a lower-bound scale of order $d$
only.
The experiment is nonidentifying. A pure state
$\ket\psi\bra\psi$ with $|\langle b|\psi\rangle|^2=1/d$ for all $b$
produces the same uniform outcome law as $I/d$, while the corresponding
entropies are $0$ and $\log d$. Hence every estimator has worst-case squared
error at least $\frac14\log^2d$ for every $n$, and the frame-spread bound is
conservative in this case.

\paragraph{Finite POVMs with higher-rank effects}
Let $\{E_j\}_{j=1}^m$ be a POVM with $\sum_jE_j=I$, and write each
effect in spectral form $E_j=\sum_\ell e_{j\ell}P_{j\ell}$ with
$e_{j\ell}>0$ and rank-one projectors $P_{j\ell}$. Then
$\nu(j,\ell)=e_{j\ell}/d$ is a probability measure with
$\sum_{j,\ell}\nu(j,\ell)P_{j\ell}=I/d$, and the refined POVM
$\{e_{j\ell}P_{j\ell}\}$ is of the form considered in the main article.
The outcome $j$ of the original POVM is the image of the refined outcome
$(j,\ell)$ under a fixed map, so an estimator based on the original data can
be applied to the refined data after discarding $\ell$. Consequently the
minimax risk of the original experiment is at least that of the refined one.
The lower bound in Theorem~\ref{thm:main} therefore transfers to the
original POVM, with the frame spread computed for the chosen refinement.

\bibliographystyle{elsarticle-num}
\bibliography{refs}

\end{document}